\documentclass[11pt]{article}

\usepackage[a4paper,margin=1in]{geometry}
\usepackage{amsmath,amssymb,amsthm}
\usepackage{booktabs}
\usepackage{hyperref}
\usepackage[T1]{fontenc}
\usepackage{lmodern}
\usepackage{microtype}
\usepackage{url}

\newtheorem{theorem}{Theorem}
\newtheorem{proposition}[theorem]{Proposition}
\newtheorem{lemma}[theorem]{Lemma}

\theoremstyle{remark}
\newtheorem*{remark}{Remark}

\newcommand{\R}{\mathbb{R}}
\newcommand{\OPT}{\mathrm{OPT}}
\newcommand{\OO}{\widetilde{O}}

\title{A note on the parameter $\ell$ in Buchbinder--Feldman's\\
deterministic submodular matroid algorithm}

\author{Shisheng Li\\
University of Science and Technology of China\\
\texttt{shisheng@mail.ustc.edu.cn}}
\date{April 28, 2026}

\begin{document}
\maketitle

\begin{abstract}
Buchbinder and Feldman~\cite{BF24} recently gave a deterministic
$(1-1/e-\varepsilon)$-approximation for maximizing a non-negative monotone
submodular function subject to a matroid constraint, with query
complexity $\OO_\varepsilon(nr)$. Their algorithm uses an integer
parameter $\ell$, which Buchbinder and Feldman fix to
$\ell = 1 + \lceil 1/\varepsilon \rceil$ (their Corollary~3.6, with the
choice anticipated in the discussion following their Proposition~1.1)
via a loose bound on $(1+1/\ell)^{-\ell}$. We point out two purely elementary refinements.
First, the classical P\'olya--Szeg\H{o} inequality
$(1+1/\ell)^{-\ell} \le e^{-1}(1+1/(2\ell))$ replaces the loose step
in their proof and permits $\ell = \lceil 1/(2e\varepsilon) \rceil$,
shrinking the hidden constant in $\OO_\varepsilon(nr)$ by a factor
$\approx 2^{0.816/\varepsilon}$. Second, an alternating-series tail
bound for $\log(1+t)$ yields the asymptotically sharp inequality
\[
(1+1/\ell)^{-\ell} \;\le\; \frac{1}{e}\,\exp\!\left(
\frac{1}{2\ell} - \frac{1}{3\ell^2} + \frac{1}{4\ell^3}\right),
\]
which matches the true expansion of $(1+1/\ell)^{-\ell}$ through order
$\ell^{-3}$ and translates into
$\ell_\star = 1/(2e\varepsilon) - 5/12 + O(\varepsilon)$.
The asymptotic class $\OO_\varepsilon(nr)$ of the query complexity is
unchanged in either case; only the implicit constant in $\varepsilon$
is improved. All inequalities in this note are formalized and
machine-checked in Lean~4 against Mathlib.
\end{abstract}

\section{Introduction}

\paragraph{Background.}
Maximizing a non-negative monotone submodular function
$f \colon 2^N \to \R_{\ge 0}$ subject to a matroid
$\mathcal{M} = (N, \mathcal{I})$ of rank $r$ is a foundational
constrained optimization problem with applications in approximation
algorithms, machine learning, and combinatorial optimization. The
celebrated continuous greedy algorithm of C\u alinescu, Chekuri,
P\'al, and Vondr\'ak~\cite{CCPV11} achieves a $(1-1/e)$-approximation
in expectation, matching the information-theoretic threshold of
Nemhauser and Wolsey~\cite{NW78}. Whether a \emph{deterministic}
$(1-1/e-\varepsilon)$-approximation exists in polynomial time was
open for over a decade.

\paragraph{The Buchbinder--Feldman result.}
Buchbinder and Feldman~\cite{BF24} resolved this question, presenting
the first deterministic $(1-1/e-\varepsilon)$-approximation with
query complexity $\OO_\varepsilon(nr)$.\footnote{The body
of~\cite{BF24} (e.g.\ their Theorem~1.2 and Corollary~3.6) phrases
the guarantee as $1-1/e-O(\varepsilon)$, since their fast Algorithm~2
incurs a separate discretization slack of $2\varepsilon\cdot f(\OPT)$
in~\cite[Cor.~3.6]{BF24}; the $(1-1/e-\varepsilon)$ form quoted here
matches their abstract and is recovered by relabelling
$\varepsilon$. We make the dependence on the
$(1+1/\ell)^{-\ell}$-component explicit in~\S\ref{sec:setup}.} Their Algorithm~2 is a
non-oblivious local search parameterised by a positive integer
$\ell$; its analysis (their Proposition~1.1) gives the guarantee
\begin{equation}
\label{eq:bf-guarantee}
f(S_{[\ell]}) \;\ge\; \bigl(1 - (1+1/\ell)^{-\ell}\bigr) \cdot f(\OPT)
\;+\; (1+1/\ell)^{-\ell} \cdot f(\emptyset).
\end{equation}
Translating~\eqref{eq:bf-guarantee} into a $(1-1/e-\varepsilon)$
guarantee requires $(1+1/\ell)^{-\ell} \le e^{-1} + \varepsilon$.
Buchbinder and Feldman bound
\begin{equation}
\label{eq:bf-loose}
(1+1/\ell)^{-\ell} \;\le\; \frac{1}{e(1-1/\ell)}
\;\le\; \frac{1}{e}\left(1 + \frac{2}{\ell}\right)
\quad (\ell \ge 2),
\end{equation}
and \emph{set} $\ell_{\mathrm{BF}} = 1 + \lceil 1/\varepsilon \rceil$,
for which the inequality $\ell_{\mathrm{BF}} \ge 1 + 1/\varepsilon
> 2/(e\varepsilon)$ (using $e > 2$) gives
$2/(e\ell_{\mathrm{BF}}) < \varepsilon$, and \eqref{eq:bf-loose}
then yields $(1+1/\ell_{\mathrm{BF}})^{-\ell_{\mathrm{BF}}} \le
e^{-1} + \varepsilon$. They do not pursue the minimal $\ell$ that
suffices.

\paragraph{Per-iteration cost.}
Each evaluation of the auxiliary function $g'$ used by
Algorithm~2 reduces to $2^\ell$ value-oracle queries on $f$ (one per
subset of a size-$\ell$ set); the dependence of the total query
complexity on $\ell$ is therefore of the form
$2^\ell \cdot \mathrm{poly}(n, r, 1/\varepsilon)$. Reducing
$\ell$ by an additive constant thus reduces the hidden constant
in $\OO_\varepsilon(nr)$ by a multiplicative factor.

\paragraph{This note.}
We point out that the step~\eqref{eq:bf-loose} can be tightened by
purely elementary means, with no change to Algorithm~2 or to its
correctness analysis. Two tightenings are presented.

\begin{enumerate}
\item \textbf{P\'olya--Szeg\H{o} bound (\S\ref{sec:ps}).} The classical
inequality $e < (1+1/n)^{n+1/2}$~\cite{Khattri2010,PolyaSzego} gives
\[
(1+1/\ell)^{-\ell} \;\le\; \frac{1}{e}\left(1 + \frac{1}{2\ell}\right),
\]
which is a factor of~$4$ tighter than~\eqref{eq:bf-loose} and permits
$\ell = \lceil 1/(2e\varepsilon) \rceil$. The factor $2e$ saved in
$\ell$ contracts the hidden constant in the query complexity by
$2^{(1-1/(2e))/\varepsilon} \approx 2^{0.816/\varepsilon}$.

\item \textbf{Asymptotically sharp bound (\S\ref{sec:sharp}).} An
alternating-series tail bound for $\log(1+t)$ yields
\begin{equation}
\label{eq:sharp-intro}
(1+1/\ell)^{-\ell} \;\le\; \frac{1}{e}\,\exp\!\left(
\frac{1}{2\ell} - \frac{1}{3\ell^2} + \frac{1}{4\ell^3}\right),
\end{equation}
matching the true asymptotic expansion of $(1+1/\ell)^{-\ell}$
through order $\ell^{-3}$. This translates to
$\ell_\star(\varepsilon) = 1/(2e\varepsilon) - 5/12 + O(\varepsilon)$
in the real-valued sense; after rounding up to an integer, the saving
over the P\'olya--Szeg\H{o} value is at most one unit in $\ell$ and
is occasionally zero (cf.~\S\ref{sec:complexity}).
The value of~\eqref{eq:sharp-intro} is asymptotic rather than practical. The elementary line of analysis can be
extended to higher orders, but the returns diminish rapidly; matching
through $\ell^{-3}$ already places $\ell_\star(\varepsilon)$ within
an additive $O(\varepsilon)$ of its asymptotic value.
\end{enumerate}

\paragraph{Scope and what this note is not.}
The contribution of this note is purely analytical: we replace one
elementary inequality used in the proof of~\cite[Proposition~1.1]{BF24}
with a sharper one. Algorithm~2 of~\cite{BF24}, its correctness
analysis, and its asymptotic class $\OO_\varepsilon(nr)$ are
unchanged. The improvements are best read as a refinement of the
implicit constant in $\varepsilon$ rather than as a separate
algorithmic result. The BF'24 algorithm is, as the authors emphasise,
theoretical and not intended as a practical implementation; finding a
practical deterministic $(1-1/e-\varepsilon)$-approximation under a
matroid constraint remains an interesting open problem. The
elementary line of analysis pursued here therefore has primarily
expository value.

\paragraph{Formal verification.}
All inequalities used in \S\ref{sec:ps}--\ref{sec:sharp} have been
formalized in Lean~4~\cite{Lean4} against Mathlib~\cite{Mathlib}; see
\S\ref{sec:lean}.

\section{Setup}
\label{sec:setup}

Fix a non-negative monotone submodular function
$f \colon 2^N \to \R_{\ge 0}$ on a finite ground set~$N$ with
$|N|=n$, and a matroid $\mathcal{M} = (N, \mathcal{I})$ of rank~$r$.
We assume access to $f$ via a value oracle that, on input
$S \subseteq N$, returns $f(S)$ in unit time.
Buchbinder and Feldman~\cite[Prop.~1.1]{BF24} establish the
guarantee~\eqref{eq:bf-guarantee} above for the output
$S_{[\ell]} \in \mathcal{I}$ (a base of $\mathcal{M}$) of their
idealized non-oblivious local search (Algorithm~1 of~\cite{BF24}),
parameterised by a positive integer $\ell$. Ignoring the non-negative
term $f(\emptyset)$, the corresponding approximation ratio is
\[
\rho(\ell) \;=\; 1 - (1+1/\ell)^{-\ell},
\]
which converges to $1 - 1/e$ as $\ell \to \infty$. Their fast
algorithm (Algorithm~2 of~\cite{BF24}, the polynomial-time version)
satisfies~\eqref{eq:bf-guarantee} up to an additive
$2\varepsilon\cdot f(\OPT)$ slack from a discretization parameter
\cite[Cor.~3.6]{BF24}; the dependence on $(1+1/\ell)^{-\ell}$ is
identical, so the analytical refinements of \S\ref{sec:ps}
and~\S\ref{sec:sharp} apply equally to both algorithms. The
per-iteration $2^\ell$ cost of Algorithm~2, which motivates tightening
$\ell$ in the first place, is discussed in \S\ref{sec:complexity}.

The remainder of this note is concerned with the question: how large
must $\ell$ be, in terms of $\varepsilon$, to guarantee
$(1+1/\ell)^{-\ell} \le e^{-1} + \varepsilon$?

Throughout \S\ref{sec:ps} and~\S\ref{sec:sharp}, $\varepsilon$ refers
exclusively to this gap between $(1+1/\ell)^{-\ell}$ and $e^{-1}$.
\cite[Algorithm~2]{BF24} uses a single $\varepsilon$ to control both
this gap and a discretization slack
($2\varepsilon \cdot f(\OPT)$ in their~\cite[Cor.~3.6]{BF24}); after a
trivial reparametrization $(\varepsilon \mapsto \varepsilon_1,
\varepsilon_2)$ the two slacks separate, and the constant-factor
improvements obtained below for the $(1+1/\ell)^{-\ell}$ side combine
additively with the discretization analysis of~\cite{BF24} (up to the
explicit factor of~$2$ on the discretization side).

\section{The P\'olya--Szeg\H{o} bound}
\label{sec:ps}

We replace~\eqref{eq:bf-loose} by a sharper, elementary bound.

\begin{lemma}
\label{lem:poly-szego}
For every integer $\ell \ge 1$,
\begin{equation}
\label{eq:sharp-ps}
(1 + 1/\ell)^{-\ell} \;\le\; \frac{1}{e}\left(1 + \frac{1}{2\ell}\right).
\end{equation}
\end{lemma}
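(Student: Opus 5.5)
We take logarithms and reduce the statement to a one-variable inequality. Setting $t = 1/\ell \in (0,1]$, inequality~\eqref{eq:sharp-ps} is equivalent to
\[
-\frac{1}{t}\log(1+t) \;\le\; -1 + \log\!\left(1+\frac{t}{2}\right),
\]
that is, to
\[
\log(1+t) \;\ge\; t\bigl(1 - \log(1+t/2)\bigr).
\]
Rather than going through the P\'olya--Szeg\H{o} form $e < (1+1/n)^{n+1/2}$ directly, we plan to prove the stronger statement
\[
\ell\log(1+1/\ell) \;\ge\; 1 - \frac{1}{2\ell} + \frac{1}{3\ell^2} - \frac{1}{4\ell^3}.
\]
This is just the alternating-series lower bound for $\log(1+t)$ truncated after the $t^4$ term:
\[
\log(1+t) \;\ge\; t - \frac{t^2}{2} + \frac{t^3}{3} - \frac{t^4}{4}, \qquad 0\le t\le 1 .
\]
That bound follows from the derivative of the difference, which is
\[
\frac{1}{1+t} - \bigl(1 - t + t^2 - t^3\bigr) \;=\; \frac{t^4}{1+t} \;\ge\; 0,
\]
together with equality at $t=0$.

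Dividing by $t$, it then suffices to show
\[
1 - \frac{t}{2} + \frac{t^2}{3} - \frac{t^3}{4} \;\ge\; 1 - \log(1+t/2),
\]
i.e.
\[
\log(1+t/2) \;\ge\; \frac{t}{2} - \frac{t^2}{3} + \frac{t^3}{4} \qquad\text{on } (0,1].
\]
For this we use the elementary bound $\log(1+u) \ge u - u^2/2$ with $u = t/2$, which gives $\log(1+t/2) \ge t/2 - t^2/8$. So we need
\[
-\frac{t^2}{8} \;\ge\; -\frac{t^2}{3} + \frac{t^3}{4},
\]
equivalently $t^2\bigl(\tfrac{5}{24} - \tfrac{t}{4}\bigr) \ge 0$, i.e.\ $t \le 5/6$. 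This covers every $\ell \ge 2$. The remaining case $\ell = 1$ is checked by hand: we need $\tfrac12 \le \tfrac{3}{2e}$, i.e.\ $e \le 3$, which holds.

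The main obstacle is the bookkeeping of signs and of the range of $t$ in the two truncated series. The first bound must stop at a negative term so that it is a lower bound, and the second crude bound $\log(1+u)\ge u-u^2/2$ only closes the gap for $t \le 5/6$. If the margin turned out too thin, we would instead use $\log(1+u) \ge u - u^2/2 + u^3/3 - u^4/4$ for the second logarithm as well; this gives a polynomial inequality of degree $4$ in $t$ on $[0,1]$, which can be checked by factoring out $t^2$. Alternatively, we could cite the classical inequality $(1+1/\ell)^{\ell+1/2} > e$ as in~\cite{PolyaSzego}, and combine it with $\sqrt{1+1/\ell} \le 1 + 1/(2\ell)$ to conclude immediately.
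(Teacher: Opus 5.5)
Your proof is correct, but it takes a different route from the paper.

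The paper writes $1-\ell\log(1+1/\ell)=\int_0^1 u/(\ell+u)\,du$. It bounds this above by $1/(2\ell+1)$ using concavity of the integrand (Hermite--Hadamard). It then bounds $\log(1+1/(2\ell))$ below by the same quantity $1/(2\ell+1)$ via $\log(1+s)\ge s/(1+s)$. That common pivot makes the argument uniform: it works for every real $\ell>0$, with no case split and no numerical information about $e$.

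You instead reuse the quartic alternating bound of Lemma~\ref{lem:tail4}, which the paper reserves for Theorem~\ref{thm:sharp}. You then show $\log(1+t/2)\ge t/2-t^2/3+t^3/4$ for $t\le 5/6$. In effect you prove that the right-hand side of~\eqref{eq:sharp} is at most $e^{-1}(1+1/(2\ell))$ for $\ell\ge 2$, so Lemma~\ref{lem:poly-szego} becomes a corollary of Theorem~\ref{thm:sharp} in that range. This gives an integral-free proof resting on a single tool (derivative-of-the-difference Taylor bounds). It also shows that Theorem~\ref{thm:sharp} genuinely refines Lemma~\ref{lem:poly-szego} for every $\ell\ge 2$.

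The cost is the separate check at $\ell=1$. That check is forced by your route itself, not by the crude bound $\log(1+u)\ge u-u^2/2$. At $t=1$ the quartic bound yields only $7/12<1-\log(3/2)\approx 0.5945$, equivalently $\exp(5/12)>3/2$. Two consequences follow, neither of which affects the proof you actually gave, since you handle $\ell=1$ directly:
\begin{itemize}
\item Calling the quartic lower bound on $\ell\log(1+1/\ell)$ a ``stronger statement'' is accurate only for $\ell\ge 2$.
\item Your fallback of applying the quartic bound to $\log(1+t/2)$ as well would not give a polynomial inequality valid on all of $[0,1]$. It reduces to $\tfrac{5}{24}(1-t)\ge t^2/64$, which fails for $t$ above roughly $0.93$.
\end{itemize}
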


\begin{remark}
Inequality~\eqref{eq:sharp-ps} is a direct consequence of the
classical bound $e < (1+1/n)^{n+1/2}$ -- see
Khattri~\cite{Khattri2010} for three short proofs; the inequality
also appears in the problem book of P\'olya and
Szeg\H{o}~\cite[Part~I, Ch.~4, \S 2, Problem 168]{PolyaSzego}, where
the monotonicity of $(1+1/x)^{x+p}$ for $p \ge 1/2$ together with
the limit $\lim (1+1/x)^{x+1/2} = e$ gives $e \le (1+1/n)^{n+1/2}$
for every integer $n \ge 1$. For related sharper approximations of
$(1+1/n)^n$ in the Batir--Cancan parameterisation see
Hu--Mortici~\cite{HuMortici2014}. We give a short self-contained
derivation of~\eqref{eq:sharp-ps} below; the Hermite--Hadamard step
mirrors Khattri's second proof~\cite[\S 3]{Khattri2010}, applied to
a different convex/concave pairing.
\end{remark}

\begin{proof}[Proof of Lemma~\ref{lem:poly-szego}]
Taking logarithms, \eqref{eq:sharp-ps} is equivalent to
\begin{equation}
\label{eq:log-form}
1 - \ell \log(1 + 1/\ell) \;\le\; \log(1 + 1/(2\ell)).
\end{equation}
From $\log(1+x) = \int_0^x (1+s)^{-1}\,ds$ with the substitution
$u = \ell s$,
\[
\ell \log(1 + 1/\ell) \;=\; \int_0^1 \frac{\ell}{\ell+u}\,du
\;=\; 1 - \int_0^1 \frac{u}{\ell+u}\,du,
\]
so the left-hand side of~\eqref{eq:log-form} equals
$\int_0^1 u/(\ell+u)\,du$. The integrand $g(u) = u/(\ell+u)$ is concave
on $[0,1]$, since $g''(u) = -2\ell/(\ell+u)^3 < 0$. By the upper half
of the Hermite--Hadamard inequality (for concave $g$ on $[a,b]$,
$\frac{1}{b-a}\int_a^b g \le g((a+b)/2)$),
\begin{equation}
\label{eq:HH}
\int_0^1 g(u)\,du \;\le\; g\!\left(\tfrac{1}{2}\right)
\;=\; \frac{1}{2\ell+1}.
\end{equation}
For the right-hand side of~\eqref{eq:log-form}, set $s = 1/(2\ell)$.
The elementary bound $\log(1+s) \ge s/(1+s)$ for $s \ge 0$ (whose
difference has non-negative derivative $s/(1+s)^2$ and vanishes at
$s=0$) gives
\begin{equation}
\label{eq:pade}
\log\!\left(1 + \frac{1}{2\ell}\right)
\;\ge\; \frac{1/(2\ell)}{1 + 1/(2\ell)}
\;=\; \frac{1}{2\ell+1}.
\end{equation}
Combining~\eqref{eq:HH} and~\eqref{eq:pade}
yields~\eqref{eq:log-form}, and hence~\eqref{eq:sharp-ps}.
\end{proof}

Lemma~\ref{lem:poly-szego} tightens the factor $2/\ell$
in~\eqref{eq:bf-loose} by a factor of $4$, and we obtain a
correspondingly smaller $\ell$:

\begin{proposition}
\label{prop:ps}
Let $f \colon 2^N \to \R_{\ge 0}$ be non-negative monotone
submodular, $\mathcal{M}$ a matroid, and $\varepsilon > 0$. Setting
\[
\ell \;=\; \left\lceil \frac{1}{2e\varepsilon} \right\rceil
\]
in~\cite[Proposition~1.1]{BF24} suffices to guarantee
\[
f(S_{[\ell]}) \;\ge\; \left(1 - \frac{1}{e} - \varepsilon\right)
\cdot f(\OPT).
\]
\end{proposition}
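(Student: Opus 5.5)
The plan is to combine the guarantee~\eqref{eq:bf-guarantee} of \cite[Proposition~1.1]{BF24} with Lemma~\ref{lem:poly-szego}. The choice $\ell = \lceil 1/(2e\varepsilon)\rceil$ is a positive integer for every $\varepsilon>0$, so Lemma~\ref{lem:poly-szego} applies to it. The whole argument reduces to one scalar inequality, $(1+1/\ell)^{-\ell} \le e^{-1} + \varepsilon$, after which the guarantee transfers directly.

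\textbf{Step 1 (scalar inequality).} Since $\ell \ge 1/(2e\varepsilon)$, we have $2\ell \ge 1/(e\varepsilon)$. Hence
\[
\frac{1}{2e\ell} \;\le\; \varepsilon .
\]
Lemma~\ref{lem:poly-szego} then gives
\[
(1+1/\ell)^{-\ell} \;\le\; \frac{1}{e} + \frac{1}{2e\ell} \;\le\; \frac{1}{e} + \varepsilon .
\]
This holds for all $\varepsilon > 0$, including $\varepsilon$ large enough that $\ell = 1$. No restriction $\ell \ge 2$ is needed, unlike in~\eqref{eq:bf-loose}.

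\textbf{Step 2 (transfer to $f$).} Substitute into~\eqref{eq:bf-guarantee}. Write $\alpha = (1+1/\ell)^{-\ell} \in (0,1)$; then
\[
f(S_{[\ell]}) \ge (1-\alpha) f(\OPT) + \alpha f(\emptyset) \ge (1-\alpha) f(\OPT),
\]
because $f$ is non-negative and $\alpha>0$. Since $f(\OPT)\ge 0$ and $1-\alpha \ge 1 - 1/e - \varepsilon$ by Step~1, this yields $f(S_{[\ell]}) \ge (1-1/e-\varepsilon) f(\OPT)$. When $1-1/e-\varepsilon<0$ the claim holds trivially, and the same chain covers that case as well.

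There is no genuine obstacle. The only point needing care is the direction of the rounding: the ceiling guarantees $\ell \ge 1/(2e\varepsilon)$, which is exactly the inequality used in Step~1. One should also state explicitly that the proposition concerns the idealized guarantee of \cite[Proposition~1.1]{BF24}. For Algorithm~2, the additive discretization slack of \cite[Cor.~3.6]{BF24} must be handled separately through the reparametrization described in \S\ref{sec:setup}, and does not enter this proof.
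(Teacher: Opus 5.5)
Your proof is correct and follows the paper's argument essentially verbatim: you apply Lemma~\ref{lem:poly-szego} to get $(1+1/\ell)^{-\ell}\le 1/e + 1/(2e\ell)$, use the ceiling to get $1/(2e\ell)\le\varepsilon$, and substitute into~\eqref{eq:bf-guarantee}, discarding the $f(\emptyset)$ term by non-negativity. Your extra remarks (that $\ell\ge 1$ so the lemma applies, and that $f(\OPT)\ge 0$ justifies the final multiplication) are harmless added care that the paper leaves implicit.
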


\begin{proof}
By Lemma~\ref{lem:poly-szego},
\[
(1+1/\ell)^{-\ell} \;\le\; \frac{1}{e}\left(1 + \frac{1}{2\ell}\right)
\;=\; \frac{1}{e} + \frac{1}{2e\ell}.
\]
The choice $\ell = \lceil 1/(2e\varepsilon) \rceil$ gives
$1/(2e\ell) \le \varepsilon$, so $(1+1/\ell)^{-\ell} \le 1/e +
\varepsilon$. The conclusion follows from~\eqref{eq:bf-guarantee} and
$f(\emptyset) \ge 0$.
\end{proof}

\section{An asymptotically sharp bound}
\label{sec:sharp}

The P\'olya--Szeg\H{o} bound~\eqref{eq:sharp-ps} captures the leading
$1/(2\ell)$ term of the expansion
\[
(1+1/\ell)^{-\ell} \;=\; \frac{1}{e}\left(1 + \frac{1}{2\ell}
- \frac{5}{24\ell^2} + \frac{5}{48\ell^3} - O(\ell^{-4})\right),
\]
but its $1/(2\ell)$ correction differs from the true second-order
behaviour. We give a one-line tightening that captures the expansion
through order $\ell^{-3}$.

\begin{lemma}[Alternating tail of $\log(1+t)$]
\label{lem:tail4}
For every $t \ge 0$,
\begin{equation}
\label{eq:tail4}
\log(1+t) \;\ge\; t - \frac{t^2}{2} + \frac{t^3}{3} - \frac{t^4}{4}.
\end{equation}
\end{lemma}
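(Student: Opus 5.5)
We prove \eqref{eq:tail4} by the same monotonicity device used for \eqref{eq:pade} in the proof of Lemma~\ref{lem:poly-szego}. Define
\[
h(t) \;=\; \log(1+t) - t + \frac{t^2}{2} - \frac{t^3}{3} + \frac{t^4}{4},
\qquad t \ge 0 .
\]
It suffices to show that $h(0)=0$ and that $h'(t) \ge 0$ on $[0,\infty)$. Then $h$ is non-decreasing there, so $h(t) \ge h(0) = 0$ for all $t \ge 0$, which is exactly \eqref{eq:tail4}.

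The only real computation is the derivative, which we compute in closed form:
\[
h'(t) \;=\; \frac{1}{1+t} - 1 + t - t^2 + t^3 .
\]
The partial geometric sum telescopes: since $(1+t)(1 - t + t^2 - t^3) = 1 - t^4$, we have
\[
1 - t + t^2 - t^3 \;=\; \frac{1-t^4}{1+t}.
\]
Therefore
\[
h'(t) \;=\; \frac{1}{1+t} - \frac{1-t^4}{1+t} \;=\; \frac{t^4}{1+t},
\]
which is non-negative for $t \ge 0$ (indeed for all $t > -1$). This settles the monotonicity.

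We do not expect a serious obstacle. The one step that needs care is the sign bookkeeping in the telescoping identity. The identity shows that truncating the alternating series after an even-degree term (here $-t^4/4$) leaves a remainder of exactly
\[
\int_0^t \frac{s^4}{1+s}\,ds \;\ge\; 0 .
\]
This integral form is an equivalent way to write the whole proof, using $\log(1+t) = \int_0^t (1+s)^{-1}\,ds$ as in Lemma~\ref{lem:poly-szego}, and it is also convenient for the Lean formalization. There, monotonicity from a non-negative derivative is available directly in Mathlib.
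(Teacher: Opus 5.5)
Your proof is correct and is essentially the paper's own: the same auxiliary function $h$, the same derivative $h'(t) = t^4/(1+t)$ obtained from $(1+t)(1-t+t^2-t^3) = 1-t^4$, and the same monotonicity conclusion $h(t) \ge h(0) = 0$. The integral form $\int_0^t s^4/(1+s)\,ds$ you mention is this same argument restated via the fundamental theorem of calculus.
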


\begin{proof}
Let $h(t) = \log(1+t) - (t - t^2/2 + t^3/3 - t^4/4)$. Then $h(0) = 0$
and
\[
h'(t) \;=\; \frac{1}{1+t} - (1 - t + t^2 - t^3)
\;=\; \frac{1 - (1+t)(1 - t + t^2 - t^3)}{1+t}
\;=\; \frac{t^4}{1+t},
\]
which is non-negative for $t > -1$, in particular for $t \ge 0$.
Thus $h$ is non-decreasing on $[0,\infty)$, and $h(t) \ge h(0) = 0$.
\end{proof}

\begin{theorem}
\label{thm:sharp}
For every integer $\ell \ge 1$,
\begin{equation}
\label{eq:sharp}
(1+1/\ell)^{-\ell} \;\le\; \frac{1}{e}\,\exp\!\left(
\frac{1}{2\ell} - \frac{1}{3\ell^2} + \frac{1}{4\ell^3}\right).
\end{equation}
Moreover the right-hand side, expanded as an asymptotic series in
$1/\ell$, agrees with $(1+1/\ell)^{-\ell}/e^{-1}$ through order
$\ell^{-3}$:
\[
\frac{1}{e}\exp\!\left(\frac{1}{2\ell} - \frac{1}{3\ell^2}
+ \frac{1}{4\ell^3}\right)
\;=\; \frac{1}{e}\left(1 + \frac{1}{2\ell} - \frac{5}{24\ell^2}
+ \frac{5}{48\ell^3} + O(\ell^{-4})\right).
\]
\end{theorem}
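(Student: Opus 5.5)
The plan is to take logarithms and apply Lemma~\ref{lem:tail4} with $t = 1/\ell$. Since both sides of~\eqref{eq:sharp} are positive, the inequality is equivalent to
\[
-\ell\log(1+1/\ell) \;\le\; -1 + \frac{1}{2\ell} - \frac{1}{3\ell^2} + \frac{1}{4\ell^3},
\]
that is, to $\ell\log(1+1/\ell) \ge 1 - \frac{1}{2\ell} + \frac{1}{3\ell^2} - \frac{1}{4\ell^3}$. Lemma~\ref{lem:tail4} at $t = 1/\ell \ge 0$ gives
\[
\log(1+1/\ell) \ge \frac{1}{\ell} - \frac{1}{2\ell^2} + \frac{1}{3\ell^3} - \frac{1}{4\ell^4}.
\]
Multiplying by $\ell > 0$ yields exactly this inequality. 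Exponentiating, and using that $\exp$ is monotone, gives~\eqref{eq:sharp}. This half is essentially immediate; the integrality of $\ell$ plays no role, and any real $\ell > 0$ works.

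For the asymptotic statement, I would set $x = 1/\ell$ and expand $\exp(x/2 - x^2/3 + x^3/4)$ as a power series in $x$ through $x^3$. Writing $y = x/2 - x^2/3 + x^3/4$, I would compute:
\begin{itemize}
\item $y^2 = x^2/4 - x^3/3 + O(x^4)$;
\item $y^3 = x^3/8 + O(x^4)$.
\end{itemize}
Then $e^y = 1 + y + y^2/2 + y^3/6 + O(x^4)$, and the coefficients come out as follows:
\begin{itemize}
\item the $x$ coefficient is $1/2$;
\item the $x^2$ coefficient is $-1/3 + 1/8 = -5/24$;
\item the $x^3$ coefficient is $1/4 - 1/6 + 1/48 = 5/48$.
\end{itemize}
This matches the displayed expansion.

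To justify that this is also the expansion of $(1+1/\ell)^{-\ell}\,e$, I would use the full series
\[
-\ell\log(1+1/\ell) = -1 + x/2 - x^2/3 + x^3/4 - x^4/5 + \cdots,
\]
valid for $0 < x \le 1$. The exponent of $(1+1/\ell)^{-\ell} e$ therefore differs from $y$ only by $O(x^4)$. Since $\exp$ is smooth and $y$ stays bounded, the two exponentials also differ by $O(x^4)$.

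The only step requiring care is the bookkeeping of the $O(x^4)$ remainders in the Taylor expansion of $\exp$. There I would note that $|y| \le 1$ for $x \le 1$, so the exponential series' tail $\sum_{k\ge4} y^k/k!$ is $O(x^4)$ uniformly. Everything else is routine coefficient arithmetic.
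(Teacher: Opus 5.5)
Your proof is correct and follows the paper's argument. You apply Lemma~\ref{lem:tail4} at $t=1/\ell$, multiply by $\ell$ and exponentiate, then expand $\exp\bigl(\frac{1}{2\ell}-\frac{1}{3\ell^2}+\frac{1}{4\ell^3}\bigr)$ to obtain the coefficients $1/2$, $-5/24$, $5/48$. You also use the Mercator series to show that $1-\ell\log(1+1/\ell)$ differs from this exponent only by $O(\ell^{-4})$; this makes self-contained the step that the paper simply cites as the known expansion of $(1+1/\ell)^{-\ell}$.
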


\begin{proof}
Apply Lemma~\ref{lem:tail4} with $t = 1/\ell$:
\[
\log(1 + 1/\ell) \;\ge\; \frac{1}{\ell} - \frac{1}{2\ell^2}
+ \frac{1}{3\ell^3} - \frac{1}{4\ell^4}.
\]
Multiplying by $\ell$,
\[
\ell\,\log(1+1/\ell) \;\ge\; 1 - \frac{1}{2\ell} + \frac{1}{3\ell^2}
- \frac{1}{4\ell^3}.
\]
Exponentiating gives
\[
(1+1/\ell)^\ell \;\ge\; e \cdot \exp\!\left(-\frac{1}{2\ell}
+ \frac{1}{3\ell^2} - \frac{1}{4\ell^3}\right),
\]
and~\eqref{eq:sharp} follows on taking reciprocals.

For the asymptotic claim, write
$\eta := 1/(2\ell) - 1/(3\ell^2) + 1/(4\ell^3)$. Expanding $\exp(\eta)
= 1 + \eta + \eta^2/2 + \eta^3/6 + O(\eta^4)$ and collecting powers of
$1/\ell$ gives
$1 + \frac{1}{2\ell} + (\frac{1}{8} - \frac{1}{3})\ell^{-2}
+ \cdots = 1 + 1/(2\ell) - 5/(24\ell^2) + 5/(48\ell^3) + O(\ell^{-4})$,
which matches the known expansion of $(1+1/\ell)^{-\ell}/e^{-1}$.
\end{proof}

\begin{proposition}
\label{prop:sharp}
Let $f \colon 2^N \to \R_{\ge 0}$ be non-negative monotone
submodular, $\mathcal{M}$ a matroid, and $\varepsilon > 0$.
Let $\ell_\star(\varepsilon)$ denote the smallest integer~$\ell \ge 1$
such that $(1+1/\ell)^{-\ell} \le e^{-1} + \varepsilon$ (well-defined
since $(1+1/\ell)^{-\ell} \to e^{-1}$ as $\ell \to \infty$, and equal
to~$1$ for $\varepsilon \ge 1/2 - e^{-1} \approx 0.132$); equivalently, the
smallest $\ell$ that suffices in~\cite[Proposition~1.1]{BF24} to give
\[
f(S_{[\ell]}) \;\ge\; \left(1 - \frac{1}{e} - \varepsilon\right)
\cdot f(\OPT).
\]
Then $\ell_\star(\varepsilon)$ satisfies the asymptotic estimate
\[
\ell_\star(\varepsilon) \;=\; \frac{1}{2e\varepsilon} - \frac{5}{12}
+ O(\varepsilon) \qquad (\varepsilon \to 0^+).
\]
\end{proposition}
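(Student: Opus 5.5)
The plan is to pass from the integer quantity $\ell_\star(\varepsilon)$ to a real threshold and invert an asymptotic expansion. Set $F(x) := (1+1/x)^{-x} - e^{-1}$ for real $x \ge 1$.

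\textbf{Step 1 (monotonicity and the real threshold).} I will first show that $F$ is strictly decreasing and tends to $0$. This is the classical increase of $x \mapsto x\log(1+1/x)$: its derivative is $\log(1+1/x) - 1/(x+1)$, which is positive by the bound $\log(1+s) \ge s/(1+s)$ already used in~\eqref{eq:pade}, with $s=1/x$. Consequently, for $0<\varepsilon<1/2-e^{-1}$ there is a unique real $x_\star(\varepsilon)$ with $F(x_\star)=\varepsilon$. Moreover $\{\ell \ge 1 : F(\ell)\le\varepsilon\} = \{\ell \ge x_\star\}$, so $\ell_\star(\varepsilon) = \lceil x_\star(\varepsilon)\rceil$.

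\textbf{Step 2 (two-sided expansion of $F$).} Next I need
\[
eF(x) = \frac{1}{2x} - \frac{5}{24x^2} + O(x^{-3}).
\]
The upper bound is Theorem~\ref{thm:sharp}, using $e^{\eta} \le 1+\eta+\eta^2/2+\eta^3$ for small $\eta$. For the lower bound I will mirror Lemma~\ref{lem:tail4}: the companion estimate $\log(1+t) \le t - t^2/2 + t^3/3$ follows from $h'(t) = -t^3/(1+t) \le 0$. Multiplying by $x$, with $t=1/x$, gives
\[
(1+1/x)^{-x} \ge e^{-1}\exp\!\bigl(\tfrac{1}{2x} - \tfrac{1}{3x^2}\bigr),
\]
whose expansion agrees with the upper bound through order $x^{-2}$. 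Together these pin down $eF(x)$ up to an $O(x^{-3})$ error.

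\textbf{Step 3 (inversion).} I will solve $eF(x) = e\varepsilon$ by writing $x = 1/(2e\varepsilon) + c + r$. Substituting into the expansion gives
\[
\frac{1}{2x} - \frac{5}{24x^2} = \frac{1}{2x}\Bigl(1 - \frac{5}{12x}\Bigr) \approx \frac{1}{2(x+5/12)},
\]
up to $O(x^{-3})$. Hence $x_\star + 5/12 = 1/(2e\varepsilon) + O(\varepsilon)$, because an $O(x^{-3})$ perturbation of a quantity of size $1/(2x)$ shifts $x$ by only $O(x^{-1}) = O(\varepsilon)$. To make this rigorous I will sandwich $x_\star$ between the two explicit monotone functions from Step 2, each inverted explicitly up to $O(\varepsilon)$. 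This yields $x_\star(\varepsilon) = 1/(2e\varepsilon) - 5/12 + O(\varepsilon)$.

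\textbf{Main obstacle: rounding.} The difficulty is the passage from $x_\star$ to $\ell_\star = \lceil x_\star \rceil$, which introduces an additive term in $[0,1)$. That term is not $O(\varepsilon)$ in general, since $1/(2e\varepsilon)$ sweeps through all fractional parts as $\varepsilon \to 0$. The expansion is therefore valid with $O(\varepsilon)$ error for the real-valued threshold $x_\star$, matching the paper's phrase ``in the real-valued sense''. For the integer $\ell_\star$ I would state the honest consequence
\[
\ell_\star(\varepsilon) = \Bigl\lceil \frac{1}{2e\varepsilon} - \frac{5}{12} + O(\varepsilon) \Bigr\rceil,
\]
equivalently $\ell_\star = 1/(2e\varepsilon) - 5/12 + \theta + O(\varepsilon)$ with $\theta\in[0,1)$. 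I would flag that the displayed estimate must be read this way.
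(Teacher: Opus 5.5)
Your proposal follows the same skeleton as the paper's proof. Strict monotonicity of $x\mapsto(1+1/x)^{-x}$ gives a unique real root $\tilde\ell(\varepsilon)$ (your $x_\star$) with $\ell_\star=\lceil\tilde\ell\rceil$. An expansion of $e(1+1/x)^{-x}-1$ through order $x^{-2}$ is then inverted, and one rounds up.

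You are more careful than the paper at two points. First, the paper takes \eqref{eq:phi-expansion} as a standard fact. You instead obtain a genuine two-sided $O(x^{-3})$ error from Theorem~\ref{thm:sharp} together with the companion bound $\log(1+t)\le t-t^2/2+t^3/3$. Second, the paper inverts by the ansatz $\ell=1/(2e\varepsilon)+c$ and matches $\varepsilon^2$-coefficients, which tacitly needs an a priori bound on $\tilde\ell-1/(2e\varepsilon)$. Your comparison with $1/(2(x+5/12))$ needs only $\tilde\ell\asymp 1/\varepsilon$, which the leading term already supplies. One minor point: Theorem~\ref{thm:sharp} is stated for integer $\ell$, but you apply it at real $x$. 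Its proof applies verbatim to real $x>0$, so say so explicitly.

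Your rounding caveat is correct, and it identifies a defect in the paper's last line rather than in your argument. The step ``the result follows on rounding up'' only yields $\ell_\star=1/(2e\varepsilon)-5/12+\theta+O(\varepsilon)$ with $\theta=\lceil\tilde\ell\rceil-\tilde\ell\in[0,1)$. Since $\tilde\ell$ is continuous in $\varepsilon$ and passes through every large half-integer, $\theta$ does not tend to $0$. The paper's own table shows this at $\varepsilon=10^{-3}$, where $1/(2e\varepsilon)-5/12\approx 183.52$ but $\ell_\star=184$. So the displayed estimate is valid for $\tilde\ell$, which is the ``real-valued sense'' mentioned in the introduction. For the integer $\ell_\star$, your ceiling form is the correct statement.
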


\begin{proof}
Treat $\phi(\ell) := (1+1/\ell)^{-\ell}$ as a smooth positive
decreasing function of real $\ell \ge 1$, with the standard expansion
\begin{equation}
\label{eq:phi-expansion}
\phi(\ell) \;=\; \frac{1}{e}\left(1 + \frac{1}{2\ell}
- \frac{5}{24\ell^2} + O(\ell^{-3})\right).
\end{equation}
A direct computation shows $\phi'(\ell) < 0$ on $\ell \ge 1$, so
$\phi$ is a strict bijection from $[1, \infty)$ onto $(e^{-1}, 1/2]$;
for every $\varepsilon \in (0, 1/2 - e^{-1}]$ there is therefore a
unique real $\tilde\ell(\varepsilon) \ge 1$ with
$\phi(\tilde\ell) = e^{-1} + \varepsilon$, and
$\ell_\star(\varepsilon) = \lceil \tilde\ell(\varepsilon) \rceil$
differs from $\tilde\ell(\varepsilon)$ by at most~$1$.

Substitute $\ell = 1/(2e\varepsilon) + c$ with $c$ to be determined.
Then $1/\ell = 2e\varepsilon - 4 e^2 c\,\varepsilon^2 +
O(\varepsilon^3)$, hence
\[
\frac{1}{2\ell} \;=\; e\varepsilon - 2e^2 c\,\varepsilon^2
+ O(\varepsilon^3),
\qquad
\frac{5}{24\ell^2} \;=\; \frac{5e^2}{6}\,\varepsilon^2
+ O(\varepsilon^3).
\]
Inserting into~\eqref{eq:phi-expansion} and using
$\phi(\tilde\ell) - 1/e = \varepsilon$,
\[
\varepsilon \;=\; \frac{1}{e}\left(e\varepsilon - 2e^2 c\,\varepsilon^2
- \frac{5e^2}{6}\,\varepsilon^2 + O(\varepsilon^3)\right)
\;=\; \varepsilon - 2e c\,\varepsilon^2 - \frac{5e}{6}\,\varepsilon^2
+ O(\varepsilon^3).
\]
Matching terms of order $\varepsilon^2$ forces
$2c + 5/6 = 0$, i.e.\ $c = -5/12$. Therefore
$\tilde\ell(\varepsilon) = 1/(2e\varepsilon) - 5/12 + O(\varepsilon)$,
and the result follows on rounding up.

A constructive certificate for any specific $\ell$ is provided by
Theorem~\ref{thm:sharp}: it suffices to verify
$\exp\bigl(1/(2\ell) - 1/(3\ell^2) + 1/(4\ell^3)\bigr) \le 1 + e\varepsilon$,
which can be checked in closed form for any candidate~$\ell$.
\end{proof}

\begin{remark}
Compared to Proposition~\ref{prop:ps}'s
$\lceil 1/(2e\varepsilon) \rceil$, the sharp value
$\ell_\star(\varepsilon)$ saves an additive $5/12 + O(\varepsilon)$ in
$\ell$, hence a factor of $2^{5/12} \approx 1.33$ in the per-iteration
$2^\ell$ factor. This is at most a factor of~$2$ in
practice: the integer-rounding boundary may occasionally swallow the
saving, leaving the same $\ell$ as Proposition~\ref{prop:ps}.
The value of Theorem~\ref{thm:sharp} is therefore primarily
asymptotic: it certifies that the elementary line of analysis
captures the true second-order behaviour of $(1+1/\ell)^{-\ell}$.
Pushing Lemma~\ref{lem:tail4} further to
$\log(1+t) \ge \sum_{j=1}^{2k}(-1)^{j+1}\,t^j/j$ for any $k \ge 2$
gives correspondingly higher-order bounds on $(1+1/\ell)^{-\ell}$, but
the returns diminish rapidly: matching through $\ell^{-3}$ already
places $\ell_\star(\varepsilon)$ within an additive $O(\varepsilon)$
of its asymptotic value, and integer rounding renders any further
refinement effectively invisible at the resolutions of interest.
\end{remark}

\section{Consequence for the query complexity}
\label{sec:complexity}

Algorithm~2 of~\cite{BF24} evaluates the auxiliary function $g'$
defined in their~\S 3.2; in the paragraph immediately preceding
\cite[Lem.~3.3]{BF24} (and again in the proof of \cite[Thm.~1.2]{BF24})
they observe that a single value-oracle query to $g'$ reduces to
$2^\ell$ value-oracle queries on $f$ (one per subset of a size-$\ell$
set). The total query complexity is therefore of the form
$2^\ell \cdot \mathrm{poly}(n, r, 1/\varepsilon)$.
Replacing
\[
\ell_{\mathrm{BF}} \;=\; 1 + \lceil 1/\varepsilon \rceil
\qquad \text{by} \qquad
\ell_{\mathrm{PS}} \;=\; \lceil 1/(2e\varepsilon) \rceil
\]
contracts the hidden $2^\ell$ factor by
\[
\frac{2^{\ell_{\mathrm{BF}}}}{2^{\ell_{\mathrm{PS}}}}
\;\ge\; 2^{(1 - 1/(2e))/\varepsilon}
\;\approx\; 2^{0.816/\varepsilon}.
\]
This decomposes multiplicatively into two independent sources:
\begin{itemize}
\item A factor $\approx 2^{(1 - 2/e)/\varepsilon} \approx
2^{0.264/\varepsilon}$ ($e/2 \approx 1.36\times$ saving in $\ell$)
arises purely because~\cite{BF24} sets
$\ell_{\mathrm{BF}} = 1 + \lceil 1/\varepsilon \rceil$ rather than the
minimal $\ell$ permitted by their own bound~\eqref{eq:bf-loose},
which would already give
$\lceil 2/(e\varepsilon) \rceil \approx 0.736/\varepsilon$.
\item A factor $\approx 2^{(3/(2e))/\varepsilon} \approx
2^{0.552/\varepsilon}$ ($4\times$ saving in $\ell$) arises from the
genuine $4$-fold tightening of the constant
in~\eqref{eq:bf-loose} by Lemma~\ref{lem:poly-szego}.
\end{itemize}
The further contraction from Proposition~\ref{prop:sharp}
contributes an additional factor of at most $2^{5/12+1} = 2^{17/12}
\approx 2.67$, including rounding effects.

\begin{table}[h]
\centering
\begin{tabular}{crrrrr}
\toprule
$\varepsilon$ & $\ell_{\mathrm{BF}}$ & $\ell_{\mathrm{PS}}$ &
$\ell_\star$ & $2^{\ell_{\mathrm{BF}} - \ell_{\mathrm{PS}}}$ &
$2^{\ell_{\mathrm{BF}} - \ell_\star}$ \\
\midrule
$10^{-1}$        & $11$   & $2$   & $2$   & $\approx 5.1\times 10^{2}$    & $\approx 5.1\times 10^{2}$ \\
$5\cdot 10^{-2}$ & $21$   & $4$   & $4$   & $\approx 1.3\times 10^{5}$    & $\approx 1.3\times 10^{5}$ \\
$10^{-2}$        & $101$  & $19$  & $18$  & $\approx 4.8\times 10^{24}$   & $\approx 9.6\times 10^{24}$ \\
$10^{-3}$        & $1001$ & $184$ & $184$ & $\approx 8.8\times 10^{245}$  & $\approx 8.8\times 10^{245}$ \\
$10^{-4}$        & $10001$& $1840$& $1839$& $\approx 5.1\times 10^{2456}$ & $\approx 1.0\times 10^{2457}$\\
\bottomrule
\end{tabular}
\caption{The Buchbinder--Feldman value $\ell_{\mathrm{BF}} = 1 + \lceil
1/\varepsilon \rceil$, the P\'olya--Szeg\H{o} value
$\ell_{\mathrm{PS}} = \lceil 1/(2e\varepsilon) \rceil$, and the sharp
value $\ell_\star$ from Theorem~\ref{thm:sharp}. The penultimate
column $2^{\ell_{\mathrm{BF}}-\ell_{\mathrm{PS}}}$ isolates the
contribution of Lemma~\ref{lem:poly-szego}; the final column
$2^{\ell_{\mathrm{BF}}-\ell_\star}$ folds in the additional gain of
Theorem~\ref{thm:sharp}. Note that integer rounding can make
$\ell_\star = \ell_{\mathrm{PS}}$ (e.g.\ at $\varepsilon = 10^{-3}$);
the gain of Theorem~\ref{thm:sharp} over Lemma~\ref{lem:poly-szego}
is at most one unit in $\ell$ and is occasionally zero.}
\end{table}

The asymptotic class $\OO_\varepsilon(nr)$ of the query complexity is
unchanged; only the implicit constant in $\varepsilon$ is improved.
Since~\cite{BF24} treats $\varepsilon$ as a fixed small constant and
absorbs the $2^\ell$ factor into the $\OO_\varepsilon$ notation,
Propositions~\ref{prop:ps}--\ref{prop:sharp} do not refute any claim
of~\cite{BF24}; they refine the implicit constant.

\section{Lean~4 formalization}
\label{sec:lean}

All inequalities used in
Lemmas~\ref{lem:poly-szego} and~\ref{lem:tail4}, and Theorem~\ref{thm:sharp}
above, are formalized in Lean~4~\cite{Lean4} against
Mathlib~\cite{Mathlib} in a single self-contained file of $419$ lines
with no \texttt{sorry} and no extra axioms. The principal lemmas, in
the notation of the Lean source, are:
\begin{itemize}
\item \texttt{log\_weak} : $s \ge 0 \;\Rightarrow\;
  s/(1+s) \le \log(1+s)$ \quad (used in the paper's analytic proof
  of Lemma~\ref{lem:poly-szego} at~\eqref{eq:pade}, and in the
  auxiliary \texttt{bf24\_lemma1} below; \emph{not} on the dependency
  path of \texttt{bf24\_lemma1pp} in the Lean source).
\item \texttt{log\_pade} : $x \ge 0 \;\Rightarrow\;
  2x/(2+x) \le \log(1+x)$ \quad (Padé-form bound on $\log$; the Lean
  proof of \texttt{bf24\_lemma1pp} uses this twice -- at $x=1/\ell$
  and $x=1/(2\ell)$ -- in place of the Hermite--Hadamard step
  on~$g(u) = u/(\ell+u)$ used in the proof of
  Lemma~\ref{lem:poly-szego}, which avoids the integral set-up at
  the cost of slightly less elegant algebra).
\item \texttt{log\_tail4} : $t \ge 0 \;\Rightarrow\;
  t - t^2/2 + t^3/3 - t^4/4 \le \log(1+t)$ \quad
  (Lemma~\ref{lem:tail4}).
\item \texttt{bf24\_lemma1} : $\ell \ge 1 \;\Rightarrow\;
  (1+1/\ell)^\ell \cdot (1 + 1/\ell) \ge e$ \quad (auxiliary 2$\times$
  refinement of~\eqref{eq:bf-loose} via \texttt{log\_weak}; not used
  in the body of this note, but included for completeness).
\item \texttt{bf24\_lemma1pp} : $\ell \ge 1 \;\Rightarrow\;
  (1+1/\ell)^\ell \cdot (1 + 1/(2\ell)) \ge e$ \quad
  (the multiplicative form of Lemma~\ref{lem:poly-szego}).
\item \texttt{bf24\_sharp} : $\ell \ge 1 \;\Rightarrow\;
  (1+1/\ell)^\ell \cdot \exp(1/(2\ell) - 1/(3\ell^2) + 1/(4\ell^3))
  \ge e$ \quad (the multiplicative form of
  Theorem~\ref{thm:sharp}).
\end{itemize}
Reciprocal forms \texttt{bf24\_lemma1pp\_inv} and
\texttt{bf24\_sharp\_inv} stating the inequalities in the form used
in this note are also included. The Lean source is hosted as a
self-contained Lake project at
\url{https://github.com/daizisheng/bf24-note}, pinning Lean toolchain
\texttt{leanprover/lean4:v4.29.0-rc6} and Mathlib commit
\texttt{921b8d39}; \texttt{lake build} from the repository root
type-checks \texttt{BF24.lean} against the pinned Mathlib, and
\texttt{\#print axioms} on each lemma reports only the standard
\texttt{propext}, \texttt{Classical.choice}, and \texttt{Quot.sound}.

We do not formalize Algorithm~2 of~\cite{BF24} itself or its
correctness analysis; the Lean development verifies only the
elementary inequalities of \S\ref{sec:ps}--\S\ref{sec:sharp}.
A reader looking specifically for a Lean rendering of the
Hermite--Hadamard step in the proof of Lemma~\ref{lem:poly-szego}
will not find one here: the Lean proof of \texttt{bf24\_lemma1pp}
certifies the conclusion of Lemma~\ref{lem:poly-szego} by an
algebraically-equivalent Padé-based route, deliberately to avoid
invoking Mathlib's integral library for a one-shot inequality.
The asymptotic expansion of $(1+1/\ell)^{-\ell}$ used in the
``moreover'' clause of Theorem~\ref{thm:sharp} and in the proof of
Proposition~\ref{prop:sharp} is taken as a standard analytic fact
and is not formalized in \texttt{BF24.lean}; only the upper bound
\texttt{bf24\_sharp} corresponding to~\eqref{eq:sharp} is
machine-checked.

\section*{Acknowledgements}

The author wrote this note as an unsolicited remark and shared it
with the authors of~\cite{BF24}; Moran Feldman responded with the
suggestion to make it publicly available, which is the impetus for
the present arXiv posting. The author thanks Niv Buchbinder and
Moran Feldman for their elegant work~\cite{BF24} and for that
suggestion.

\end{document}